\newcommand{\hf}{\hat{f}}
\newcommand{\hy}{\hat{y}}
\newcommand{\hbeta}{\hat{\beta}}
\newcommand{\hmu}{\hat{\mu}}
\newcommand{\hpi}{\hat{\pi}}
\newcommand{\nn}{\mathcal{N}}
\newcommand{\RR}{\mathbb{R}}
\newcommand{\EE}[2][]{\mathbb{E}_{#1}\left[#2\right]}
\newcommand{\Var}[2][]{\operatorname{Var}_{#1}\left[#2\right]}
\newcommand{\hVar}[2][]{\widehat{\operatorname{Var}}_{#1}\left[#2\right]}
\newcommand{\hCov}[2][]{\widehat{\operatorname{Cov}}_{#1}\left[#2\right]}
\newcommand{\Norm}[1]{\left\lVert #1\right\rVert}
\newcommand{\argmin}{\operatorname{argmin}}
\newcommand{\eqand}{\text{ and }}
\newcommand{\where}{\text{ where }}
\newcommand{\with}{\text{ with }}
\newcommand{\simiid}{\stackrel{iid}{\sim}}
\newcommand{\p}[1]{\left(#1\right)}
\newtheorem{thm}{Theorem}
\let\emptyset\varnothing
\newcommand{\fbb}[1]{{\scriptstyle [#1]}}
\newcommand{\yold}{\hy_i^{(t)}}
\newcommand{\pred}{\check{y}_i^{(t)}}
\newcommand{\ynew}{\hy_i^{(t+1)}\fbb{\yold}}
\newcommand{\ynewplain}{\hy_i^{(t+1)}}
\newcommand{\ynewclean}{\hy_i^{(t+1)}\fbb{\emptyset}}
\newcommand{\ynewp}{\hy_i^{(t+1)}\fbb{\pred}}
\newcommand{\plainy}{y_i^{(t)}}
\newcommand{\ynewy}{\hy_i^{(t+1)}\fbb{\plainy}}
\newcommand{\yoldclean}{\hy_i^{(t)}\fbb{\emptyset}}
\newcommand{\noise}{\nu_i^{(t)}}
\newcommand{\ynewnoised}{\hy_i^{(t+1)}\fbb{\yold + \noise}}
\newcommand{\xnew}{x_i^{(t+1)}\fbb{\yold}}
\newcommand{\xnewclean}{x_i^{(t+1)}\fbb{\emptyset}}
\newcommand{\noisesd}{\sigma_\nu}
\newcommand{\noisevar}{\noisesd^2}
\newcommand{\condbar}{{\ \Big| \ }}
\newcommand{\noiseden}{\varphi_N} 
\newcommand{\secshrink}{\vspace{-1mm}}
\title{Feedback Detection for Live Predictors}
\renewcommand*{\@fnsymbol}[1]{\ensuremath{\ifcase#1\or  \or \dagger\or \ddagger\or
   \mathsection\or \mathparagraph\or \|\or **\or \dagger\dagger
   \or \ddagger\ddagger \else\@ctrerr\fi}}
\author{Stefan Wager, \, Nick Chamandy,\, Omkar Muralidharan,\, and\, Amir Najmi \\
\texttt{swager@stanford.edu, \{chamandy, omuralidharan, amir\}@google.com} \\
\textit{Stanford University and Google, Inc.}}
\begin{document}

\maketitle

\begin{abstract}
A predictor that is deployed in a live production system may perturb the features it uses to make predictions. Such a feedback loop can occur, for example, when a model that predicts a certain type of behavior ends up causing the behavior it predicts, thus creating a self-fulfilling prophecy. In this paper we analyze predictor feedback detection as a causal inference problem, and introduce a local randomization scheme that can be used to detect non-linear feedback in real-world problems. We conduct a pilot study for our proposed methodology using a predictive system currently deployed as a part of a search engine.
\end{abstract}

\section{Introduction}

When statistical predictors are deployed in a live production environment, feedback loops can become a concern. Predictive models are usually tuned using training data that has not been influenced by the predictor itself; thus, most real-world predictors cannot account for the effect they themselves have on their environment.
Consider the following caricatured example: A search engine wants to train a simple classifier that predicts whether a search result is ``newsy'' or not, meaning that the search result is relevant for people who want to read the news. This classifier is trained on historical data, and learns that high click-through rate (CTR) has a positive association with ``newsiness.'' Problems may arise if the search engine deploys the classifier, and starts featuring search results that are predicted to be newsy for some queries: promoting the search result may lead to a higher CTR, which in turn leads to higher newsiness predictions, which makes the result be featured even more.

If we knew beforehand all the channels through which predictor feedback can occur, then detecting feedback would not be too difficult.  For example, in the context of the above example, if we knew that feedback could \emph{only} occur through some changes to the search result page that were directly triggered by our model,
then we could estimate feedback by running small experiments where we turn off these triggering rules. However, in large industrial systems where networks of classifiers all feed into each other, we can no longer hope to understand {\it a priori} all the ways in which feedback may occur. We need a method that lets us detect feedback from sources we might not have even known to exist.

This paper proposes a
simple method for detecting feedback loops from unknown sources in live systems.
Our method relies on artificially inserting a small amount of noise into the predictions made by a model, and then measuring the effect of this noise on future predictions made by the model. If future model predictions change when we add artificial noise,
then our system has feedback.

To understand how random noise can enable us to detect feedback, suppose that we have a model with predictions $\hy$ in which tomorrow's prediction $\hy^{(t+1)}$ has a linear feedback dependence on today's prediction $\hy^{(t)}$: if we increase $\hy^{(t)}$ by $\delta$, then $\hy^{(t+1)}$ increases by $\beta\,\delta$ for some $\beta \in \RR$. Intuitively, we should be able to fit this slope $\beta$ by perturbing $\hy^{(t)}$ with a small amount of noise $\nu \sim \nn\p{0, \, \noisevar}$ and then regressing the new $\hy^{(t+1)}$ against the noise; the reason least squares should work here is that the noise $\nu$ is independent of all other variables by construction. The main contribution of this paper is to turn this simple estimation idea into a general procedure that can be used to detect feedback in realistic problems where the feedback has non-linearities and jumps.


\secshrink

\paragraph{Counterfactuals and Causal Inference}

Feedback detection is a problem in causal inference. A model suffers from feedback if the predictions it makes today affect the predictions it will make tomorrow. We are thus interested in discovering a causal relationship between today's and tomorrow's predictions; simply detecting a correlation is not enough. The distinction between causal and associational inference is acute in the case of feedback: today's and tomorrow's predictions are almost always strongly correlated, but this correlation by no means implies any causal relationship.

In order to discover causal relationships between consecutive predictions, we need to use some form of randomized experimentation. In our case, we add a small amount of random noise to our predictions. Because the noise is fully artificial, we can reasonably ask counterfactual questions of the type: ``How would tomorrow's predictions have changed if we added more/less noise to the predictions today?'' The noise acts as an independent instrument that lets us detect feedback.
We frame our analysis in terms of a potential outcomes model that asks how the world would have changed had we altered a treatment; in our case, the treatment is the random noise we add to our predictions. This formalism, often called the Rubin causal model \cite{holland1986statistics}, is regularly used for understanding causal inference \cite{angrist1996identification,efron1991compliance, imbens1994identification}.
Causal models are useful for studying the behavior of live predictive systems on the internet, as shown by, e.g., the recent work of Bottou et al. \cite{bottou2013counterfactual} and Chan et al. \cite{chan2010evaluating}.

\secshrink

\paragraph{Outline and Contributions}

In order to define a rigorous feedback detection procedure, we need to have a precise notion of what we mean by feedback. Our first contribution is thus to provide such a model by defining statistical feedback in terms of a potential outcomes model (Section \ref{sec:main}). Given this feedback model, we propose a local noising scheme that can be used to fit feedback functions with non-linearities and jumps (Section \ref{sec:details}). Before presenting general version of our approach, however, we begin by discussing the linear case in Section \ref{sec:linear} to elucidate the mathematics of feedback detection: as we will show, the problem of linear feedback detection using local perturbations reduces to linear regression. Finally, in Section \ref{sec:real_ex} we conduct a pilot study based on a predictive model currently deployed as a part of a search engine.


\secshrink

\section{Feedback Detection for Statistical Predictors}
\label{sec:main}

\secshrink

Suppose that we have a model that makes predictions $\yold$ in time periods $t = 1, \, 2, \, ...$ for examples $i = 1, \, ..., \, n$. The predictive model itself is taken as given; our goal is to understand feedback effects between consecutive pairs of predictions $\yold$ and $\ynewplain$.
We define statistical feedback in terms of counterfactual reasoning: we want to know what would have happened to $\ynewplain$ had $\yold$ been different. We use potential outcomes notation \cite{rubin2005causal} to distinguish between counterfactuals: let $\ynewy$ be the predictions our model  \emph{would have made} at time $t+1$ if we had published $\plainy$ as our time-$t$ prediction. In practice we only get to observe $\ynewy$ for a single $\plainy$; all other values of $\ynewy$ are counterfactual.
We also consider $\ynewclean$, the prediction our model would have made at time $t+1$ if the model never made any of its predictions public and so did not have the chance to affect its environment. With this notation, we define feedback as
\begin{equation}
\text{feedback}_i^{(t)} = \ynew - \ynewclean,
\end{equation}
i.e., the difference between the predictions our model actually made and the predictions it would have made had it not had the chance to affect its environment by broadcasting predictions in the past. Thus, statistical feedback is a difference in potential outcomes.

\secshrink

\paragraph{An additive feedback model}
In order to get a handle on feedback as defined above, we assume that feedback enters the model additively:
$\ynewy = \ynewclean + f(\plainy),$
where $f$ is a feedback function, and $\plainy$ is the prediction published at time $t$. In other words, we assume that the predictions made by our model at time $t+1$ are the sum of the prediction the model would have made if there were no feedback, plus a feedback term that only depends on the previous prediction made by the model. Our goal is to estimate the feedback function $f$.

\secshrink

\paragraph{Artificial noising for feedback detection}  The relationship between $\yold$ and $\ynewplain$ can be influenced by many things, such as trends, mean reversion, random fluctuations, as well as feedback. In order to isolate the effect of feedback, we need to add some noise to the system to create a situation that resembles a randomized experiment.
Ideally, we might hope to sometimes turn our predictive system off in order to get estimates of $\yoldclean$. However, predictive models are often deeply integrated into large software systems, and it may not be clear what the correct system behavior would be if we turned the predictor off. To side-step this concern, we randomize our system by adding artificial noise to predictions: at time $t$, instead of deploying the prediction $\yold$, we deploy
$\pred = \yold + \noise$, where $\noise \simiid N$
is artificial noise drawn from some distribution $N$.
Because the noise $\noise$ is independent from everything else, it puts us in a randomized experimental setup that allows us to detect feedback as a causal effect. If the time $t+1$ prediction $\ynewplain$ is affected by $\noise$, then our system must have feedback because the only way $\noise$ can influence $\ynewplain$ is through the interaction between our model predictions and the surrounding environment at time $t$.

\secshrink

\paragraph{Local average treatment effect} In practice, we want the noise $\noise$ to be small enough that it does not disturb the regular operation of the predictive model too much. Thus, our experimental setup allows us to measure feedback as a local average treatment effect \cite{imbens1994identification}, where the artificial noise $\noise$ acts as a continuous treatment. Provided our additive model holds, we can then piece together these local treatment effects into a single global feedback function $f$.

\secshrink

\section{Linear Feedback}
\label{sec:linear}

\secshrink

We begin with an analysis of linear feedback problems; the linear setup allows us to convey the main insights with less technical overhead. We discuss the non-linear case in Section \ref{sec:details}.
Suppose that we have some natural process $x^{(1)}, \, x^{(2)}, \, ...$ and a predictive model of the form $\hy = w \cdot x$. (Suppose for notational convenience that $x$ includes the constant, and the intercept term is folded into $w$.) For our purposes, $w$ is fixed and known; for example, $w$ may have been set by training on historical data. At some point, we ship a system that starts broadcasting the predictions $\hy = w \cdot x$, and there is a concern that the act of broadcasting the $\hy$ may perturb the underlying $x^{(t)}$ process. Our goal is to detect any such feedback.
Following earlier notation we write
$\ynew = w \cdot \xnew $
for the time $t+1$ variables perturbed by feedback, and
$\ynewclean = w \cdot \xnewclean$
for the counterparts we would have observed without any feedback.

In this setup,  any effect of $\yold$ on $\xnew$ is feedback. A simple way to constrain this relationship is using a linear model
$\xnew = \xnewclean + \yold \, \gamma$.
In other words, we assume that $\xnew$ is perturbed by an amount that scales linearly with $\yold$. Given this simple model, we find that:
\begin{equation}
\label{eq:linear}
\ynew = \ynewclean + w \cdot \gamma \, \yold, 
\end{equation}
and so $f(y) = \beta \, y$ with $\beta = w \cdot \gamma$; $f$ is the feedback function we want to fit.

We cannot work with \eqref{eq:linear} directly, because $\ynewclean$ is not observed. In order to get around this problem, we add artificial noise to our predictions: at time $t$, we publish predictions $\pred = \yold + \noise$ instead of the raw predictions $\yold$. As argued in Section \ref{sec:main}, this method lets us detect feedback because $\ynewplain$ can only depend on $\noise$ through a feedback mechanism, and so any relationship between $\ynewplain$ and $\noise$ must be a symptom of feedback.

\secshrink

\paragraph{A Simple Regression Approach}

With the linear feedback model \eqref{eq:linear}, the effect of $\noise$ on $\ynewplain$ is
$\ynewnoised = \ynew + \beta \, \noise.$
This relationship suggests that we should be able to recover $\beta$ by regressing $\ynewplain$ against the added noise $\noise$. The following result confirms this intuition.

\begin{thm}
\label{thm:linear}
Suppose that \eqref{eq:linear} holds, and that we add noise $\noise$ to our time $t$ predictions. If we estimate $\beta$ using linear least squares
\begin{align}
\label{eq:simplefit}
\hbeta = \frac{\hCov{\ynewnoised, \, \noise}}{\hVar{\noise}}, \text{ then }
\sqrt{n} \, \p{\hbeta - \beta} \Rightarrow \nn\p{0, \, \frac{\Var{\ynew}}{\noisevar}},
\end{align}
where $\noisevar = \Var{\noise}$ and $n$ is the number of examples to which we applied our predictor.
\end{thm}

Theorem \ref{thm:linear} gives us a baseline understanding for the difficulty of the feedback detection problem: the precision of our feedback estimates scales as the ratio of the artificial noise $\noisevar$ to natural noise $\operatorname{Var}[\ynew]$. Note that the proof of Theorem \ref{thm:linear} assumes that we only used predictions from a single time period $t+1$ to fit feedback, and that the raw predictions $\ynew$ are all independent. If we relax these assumptions we get a regression problem with correlated errors, and need to be more careful with technical conditions.

\secshrink

\paragraph{Efficiency and Conditioning}
\label{sec:cond}

The simple regression model \eqref{eq:simplefit} treats the term $\ynew$ as noise. This is quite wasteful: if we know $\yold$ we usually have a fairly good idea of what $\ynew$ should be, and not using this information needlessly inflates the noise. Suppose that we knew the function\footnote{In practice we do not know $\mu$, but we can estimate it; see Section \ref{sec:details}.}
\begin{equation}
\label{eq:mu1}
\mu(y) := \EE{\ynew \condbar \yold = y}.
\end{equation}
Then, we could write our feedback model as
\begin{equation}
\ynewnoised = \mu\p{\yold} + \p{\ynew - \mu\p{\yold}} + \beta \, \noise,
\end{equation}
where $\mu(\yold)$ is a known offset. Extracting this offset improves the precision of our estimate for $\hbeta$.

\begin{thm}
\label{thm:lin_efficient}
Under the conditions of Theorem \ref{thm:linear} suppose that the function $\mu$ from \eqref{eq:mu1} is known and that the $\ynewplain$ are all independent of each other conditional on $\yold$. Then, given the information available at time $t$, the estimate
\begin{align}
\label{eq:eff_fit}
&\hbeta^* = \frac{\hCov{\ynewnoised - \mu\p{\yold}, \, \noise}}{\hVar{\noise}} \;\; \text{ has asymptotic distribution} \\
\label{eq:eff_beta}
&\sqrt{n} \, \p{\hbeta^* - \beta} \Rightarrow \nn\p{0, \, \frac{\EE{\Var{\ynew \condbar \yold}}}{\noisevar}}.
\end{align}
Moreover, if the variance of
$\eta_i^{(t)} = \ynew - \mu(\yold)$
does not depend on $\yold$, then $\hbeta^*$ is the best linear unbiased estimator of $\beta$.
\end{thm}

Theorem \ref{thm:lin_efficient} extends the general result from above that the precision with which we can estimate feedback scales as the ratio of artificial noise to natural noise.  The reason why $\hbeta^*$ is more efficient than $\hbeta$ is that we managed to condition away some of the natural noise, and reduced the variance of our estimate for $\beta$ by
\begin{equation}
\Var{\mu\p{\yold}} = \Var{\ynew} - \EE{\Var{\ynew \condbar \yold}}.
\end{equation}
In other words, the variance reduction we get from $\hbeta^*$ directly matches the amount of variability we can explain away by conditioning.
The estimator \eqref{eq:eff_fit} is not practical as stated, because it requires knowledge of the unknown function $\mu$ and is restricted to the case of linear feedback. In the next section, we generalize this estimator into one that does not require prior knowledge of $\mu$ and can handle non-linear feedback.

\secshrink

\section{Fitting Non-Linear Feedback}
\label{sec:details}

\secshrink

Suppose now that we have the same setup as in the previous section, except that now feedback has a non-linear dependence on the prediction:
 $\ynew = \ynewclean + f(\yold)$ for some arbitrary function $f$.
For example, in the case of a linear predictive model $\hy = w \cdot x$, this kind of feedback could arise if we have feature feedback
$ \xnew = \xnewclean + f_{(x)}(\yold); $
the feedback function then becomes $f(\cdot) = w \cdot f_{(x)}(\cdot)$.
When we add noise $\noise$ to the above predictions, we only affect the feedback term $f(\cdot)$:
\begin{equation}
\label{eq:delta}
\ynewnoised - \ynew = f\left(\yold + \noise\right) - f\left(\yold\right).
\end{equation}
Thus, by adding artificial noise $\noise$, we are able to cancel out the nuisance terms, and isolate the feedback function $f$ that we want to estimate.
We cannot use \eqref{eq:delta} in practice, though, as we can only observe one of $\ynewnoised$ or $\ynew$ in reality; the other one is counterfactual. We can get around this problem by conditioning on $\yold$ as in Section \ref{sec:cond}. Let
\begin{align}
\label{eq:mean_noise}
\mu\p{y}
&= \EE{\ynewnoised \condbar \yold = y} \\
\notag
&= t\p{y} + \varphi_{N} * f\p{y}, \where t\p{y} = \EE{\ynewclean \condbar \yold = y}
\end{align}
is a term that captures trend effects that are not due to feedback. The $*$ denotes convolution:
\begin{equation}
\noiseden * f\p{y} = \EE{f\p{\yold + \noise} \condbar \yold = y} \with \noise \sim N.  
\end{equation}
Using the conditional mean function $\mu$ we can write our expression of interest as
\begin{align}
\label{eq:main_regr}
\ynewnoised - \mu\p{\yold} = f\p{\yold + \noise} - \noiseden * f\p{\yold} + \eta_i^{(t)},
\end{align}
where $\eta_i^{(t)} := \ynewclean - t\p{\yold}$.
If we have a good idea of what $\mu$ is, the left-hand side can be measured, as it only depends on $\ynewnoised$ and $\yold$. Meanwhile, conditional on $\yold$, the first two terms on the right-hand side only depend on $\noise$, while $\eta_i^{(t)}$ is independent of $\noise$ and mean-zero. The upshot is that we can treat \eqref{eq:main_regr} as a regression problem where $\eta_i^{(t)}$ is noise.
In practice, we estimate $\mu$ from an auxiliary problem
where we regress $\ynewnoised$ against $\yold$.

\secshrink

\paragraph{A Pragmatic Approach}
\label{sec:pragmatic}

There are many possible approaches to solving the non-parametric system of equations \eqref{eq:main_regr} for $f$, e.g., \cite{hastie2009elements}, Chapter 5. Here, we take a pragmatic approach, and constrain ourselves to solutions of the form
$\hmu(y) = \hbeta_\mu \cdot b_\mu(y)$ and $\hf(y) = \hbeta_f \cdot b_f(y)$,
where $b_\mu: \RR \rightarrow \RR^{p_\mu}$ and $b_f:\RR \rightarrow \RR^{p_f}$ are predetermined basis expansions. This approach transforms our problem into an ordinary least-squares problem, and works well in terms of producing reasonable feedback estimates in real-world problems (see Section \ref{sec:real_ex}). 
If this relation in fact holds for some values $\beta_\mu$ and $\beta_f$, the result below shows that we can recover $\beta_f$ by least-squares.

\begin{thm}
\label{thm:spline}
Suppose that $\beta_\mu$ and $\beta_f$ are defined as above, and that we have an unbiased estimator $\hbeta_\mu$ of $\beta_\mu$ with variance
$ V_\mu = \operatorname{Var}[\hbeta_\mu]. $
Then, if we fit $\beta_f$  by least squares using \eqref{eq:main_regr} as described in Appendix \ref{sec:nl_fit}, the resulting estimate $\hbeta_f$ is unbiased and has variance
\begin{equation}
\label{eq:full_var}
\Var{\hbeta_f} = \p{X_f^\intercal X_f}^{-1}X_f^\intercal \p{V_Y + X_\mu V_\mu X_\mu^\intercal} X_f \p{X_f^\intercal X_f}^{-1},
\end{equation}
where the design matrices $X_\mu$ and $X_f$ are defined as
\begin{equation}
\label{eq:design}
X_\mu = \begin{pmatrix} \vdots \\ b^\intercal_\mu\p{\yold} \\ \vdots \end{pmatrix}
\eqand
X_f = \begin{pmatrix} \vdots \\ b^\intercal_f\p{\yold + \noise} - \p{\noiseden * b_f}^\intercal\p{\yold}\\ \vdots \end{pmatrix}
\end{equation}
and $V_Y$ is a diagonal matrix with $\p{V_Y}_{ii} = \Var{\ynew \condbar \yold}$.
\end{thm}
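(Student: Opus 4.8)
The plan is to recast the two-step fit of Appendix~\ref{sec:nl_fit} in matrix form and then push both sources of randomness — the intrinsic noise $\eta_i^{(t)}$ of \eqref{eq:main_regr} and the plug-in error $\hbeta_\mu - \beta_\mu$ — through the linear map defining ordinary least squares. Stack the $n$ examples: let $Y$ be the vector with entries $\ynewnoised$, let $X_\mu, X_f$ be the design matrices of \eqref{eq:design}, and let $\eta$ be the vector with entries $\eta_i^{(t)} = \ynewclean - t(\yold)$. Combining \eqref{eq:main_regr} with the basis constraint \eqref{eq:constraint} and the linearity of convolution (so $\varphi_{\noisesd}*f = (\varphi_{\noisesd}*b_f)^\intercal\beta_f$), the $i$-th coordinate of $X_\mu\beta_\mu + X_f\beta_f + \eta$ is exactly $\ynewnoised$; that is, $Y = X_\mu\beta_\mu + X_f\beta_f + \eta$ holds identically. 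The second step regresses the response $Y - X_\mu\hbeta_\mu$ on $X_f$, so $\hbeta_f = (X_f^\intercal X_f)^{-1}X_f^\intercal(Y - X_\mu\hbeta_\mu)$, and substituting the identity gives the key decomposition
\[
\hbeta_f - \beta_f = (X_f^\intercal X_f)^{-1}X_f^\intercal\left(\eta - X_\mu(\hbeta_\mu - \beta_\mu)\right).
\]
From here everything is accounting on the effective error $e := \eta - X_\mu(\hbeta_\mu - \beta_\mu)$.

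I would then condition on the time-$t$ information (the realized $\yold$ and $\noise$), which freezes $X_\mu$ and $X_f$. Unbiasedness reduces to $\EE{e \condbar \text{time }t} = 0$. The first piece vanishes by the definition of the trend function, $\EE{\eta_i^{(t)}\condbar\yold} = \EE{\ynewclean\condbar\yold} - t(\yold) = 0$, and since $\noise$ is artificial noise independent of $\ynewclean$ this survives the extra conditioning on $\noise$; the second piece vanishes because $\hbeta_\mu$ is unbiased by hypothesis (the appendix construction being arranged so this holds given the time-$t$ design). Hence $\EE{\hbeta_f} = \beta_f$. I would also note at this point the standing identifiability requirement that $X_f^\intercal X_f$ be invertible, which is why $b_f$ carries no intercept (the intercept of $f$ being non-identifiable, as discussed after \eqref{eq:main}).

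For the variance I would use, still conditionally on the time-$t$ design, $\Var{\hbeta_f} = (X_f^\intercal X_f)^{-1}X_f^\intercal\,\Var{e}\,X_f(X_f^\intercal X_f)^{-1}$ and evaluate $\Var{e}$ piece by piece. The intrinsic term gives $\Var{\eta\condbar\text{time }t} = V_Y$: it is diagonal since the $\ynewplain$ are conditionally independent across $i$ given $\yold$, and its $i$-th entry equals $\Var{\ynewclean\condbar\yold} = \Var{\ynew\condbar\yold}$ because $\ynew - \ynewclean = f(\yold)$ is deterministic given $\yold$. The plug-in term gives $X_\mu\Var{\hbeta_\mu}X_\mu^\intercal = X_\mu V_\mu X_\mu^\intercal$. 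Adding these and dropping the cross term (see below) yields $\Var{e} = V_Y + X_\mu V_\mu X_\mu^\intercal$, and substituting into the sandwich produces exactly \eqref{eq:full_var}.

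The one step requiring genuine care — and the main obstacle — is justifying that the cross term $\Cov{\eta,\,\hbeta_\mu\condbar\text{time }t}$ is zero, which is precisely what makes $\Var{e}$ decompose additively. This is immediate if $\hbeta_\mu$ is fitted on data independent of the $\ynewclean$ driving $\eta$; if instead $\hbeta_\mu$ reuses the same responses, one must check that the construction in Appendix~\ref{sec:nl_fit} removes this correlation, since a naive same-sample least-squares fit of $\mu$ would inject cross terms of the form $-H_\mu V_Y - V_Y H_\mu$, with $H_\mu = X_\mu(X_\mu^\intercal X_\mu)^{-1}X_\mu^\intercal$, that do not appear in \eqref{eq:full_var}. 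In the write-up I would either appeal to the appendix to establish this independence or state it as the operative hypothesis behind the clean formula; the remaining manipulations are routine linear-model algebra.
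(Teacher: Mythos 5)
Your proposal is correct and takes essentially the same approach as the paper, which disposes of the computation by citing standard results on heteroscedastic linear regression while you write out the sandwich derivation explicitly. In particular, the cross-term concern you isolate is exactly the point the paper addresses by assuming that $\hbeta_\mu$ and $\hbeta_f$ are trained on independent data sets, so your operative hypothesis coincides with the paper's.
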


 In the case where our spline model is misspecified, we can obtain a similar result using methods due to Huber \cite{huber1967behavior} and White \cite{white1980heteroskedasticity}.
In practice, we can treat $\hbeta_\mu$ as known since fitting $\mu(\cdot)$ is usually easier than fitting $f(\cdot)$: estimating $\mu(\cdot)$ is just a smoothing problem whereas estimating $f(\cdot)$ requires fitting differences. If we also treat the errors $\eta_i^{(t)}$ in \eqref{eq:main_regr} as roughly homoscedatic, \eqref{eq:full_var} reduces to
\begin{align}
\label{eq:simple_var}
\Var{\hbeta_f} \approx \frac{\EE{\Var{\ynew \condbar \yold}}}{n \, \EE{\|s_i\|_2^2}},
\where s_i = b_f\p{\yold + \noise} - \noiseden * b_f\p{\yold}.
\end{align}
This simplified form again shows that the precision of our estimate of $f(\cdot)$ scales roughly as the ratio of the variance of the artificial noise $\noise$ to the variance of the natural noise.

\secshrink

\paragraph{Our Method in Practice}

For convenience, we summarize the steps needed to implement our method here:
{\bf (1)} At time $t$, compute model predictions $\yold$ and draw noise terms $\noise \simiid N$ for some noise distribution $N$.
Deploy predictions $\pred = \yold + \noise$ in the live system.
{\bf (2)} Fit a non-parametric least-squares regression of $\ynewnoised \sim \mu\left(\yold\right)$ to learn the function
$\mu\left(y \right) := \EE{\ynewnoised \condbar \yold = y}$.
We use the \texttt{R} formula notation, where $a \sim g(b)$ means that we want to learn a function $g(b)$ that predicts $a$. 
{\bf (3)} Set up the non-parametric least-squares regression problem
\begin{equation}
\label{eq:main}
\ynewnoised - \mu\left(\yold\right) \sim f\left(\yold + \noise\right) - \varphi_{N} * f\left(\yold\right),
\end{equation}
where the goal is to learn $f$. Here, $\varphi_{N}$ is the density of $\noise$, and $*$ denotes convolution. 
In Appendix \ref{sec:nl_fit} we show how to carry out these steps using standard \texttt{R} libraries.

The resulting function $f(y)$ is our estimate of feedback: If we make a prediction $\pred$ at time $t$, then our time $t+1$ prediction will be boosted by $f(\pred)$. The above equation only depends on $\yold$, $\noise$, and $\ynewnoised$, which are all quantities that can be observed in the context of an experiment with noised predictions. Note that as we only fit $f$ using the differences in \eqref{eq:main}, the intercept of $f$ is not identifiable. We fix the intercept (rather arbitrarily) by setting the average fitted feedback over all training examples to 0; we do not include an intercept term in the basis $b_f$.

\secshrink

\paragraph{Choice of Noising Distribution}

Adding noise to deployed predictions often has a cost that may depend on the shape of the noise distribution $N$. A good choice of $N$ should reflect this cost. For example, if the practical cost of adding noise only depends on the largest amount of noise we ever add, then it may be a good idea to draw $\noise$ uniformly at random from $\{\pm \varepsilon\}$ for some $\varepsilon > 0$. In our experiments, we draw noise from a Gaussian distribution $\noise \sim \nn(0, \, \noisevar)$. 

\secshrink

\section{A Pilot Study}
\label{sec:real_ex}

\secshrink

The original motivation for this research was to develop a methodology for detecting feedback in real-world systems. Here, we present results from a pilot study, where we added signal to historical data that we believe should emulate actual feedback. The reason for monitoring feedback on this system is that our system was about to be more closely integrated with other predictive systems, and there was a concern that the integration could induce bad feedback loops. Having a reliable method for detecting feedback would provide us with an early warning system during the integration.

The predictive model in question is a logistic regression classifier. We added feedback to historical data collected from log files according to half a dozen rules of the form ``if $a_i^{(t)}$ is high and $\pred > 0$, then increase $a_i^{(t+1)}$ by a random amount''; here $\pred$ is the time-$t$ prediction deployed by our system (in log-odds space) and $a_i^{(t)}$ is some feature with a positive coefficient. These feedback generation rules do not obey the additive assumption. Thus our model is misspecified in the sense that there is no function $f$ such that a current prediction $\pred$ increased the log-odds of the next prediction by $f(\pred)$, and so this example can be taken as a stretch case for our method.

\begin{figure}[t]
\vspace{-5mm}
\centering
\includegraphics[width=0.5\textwidth]{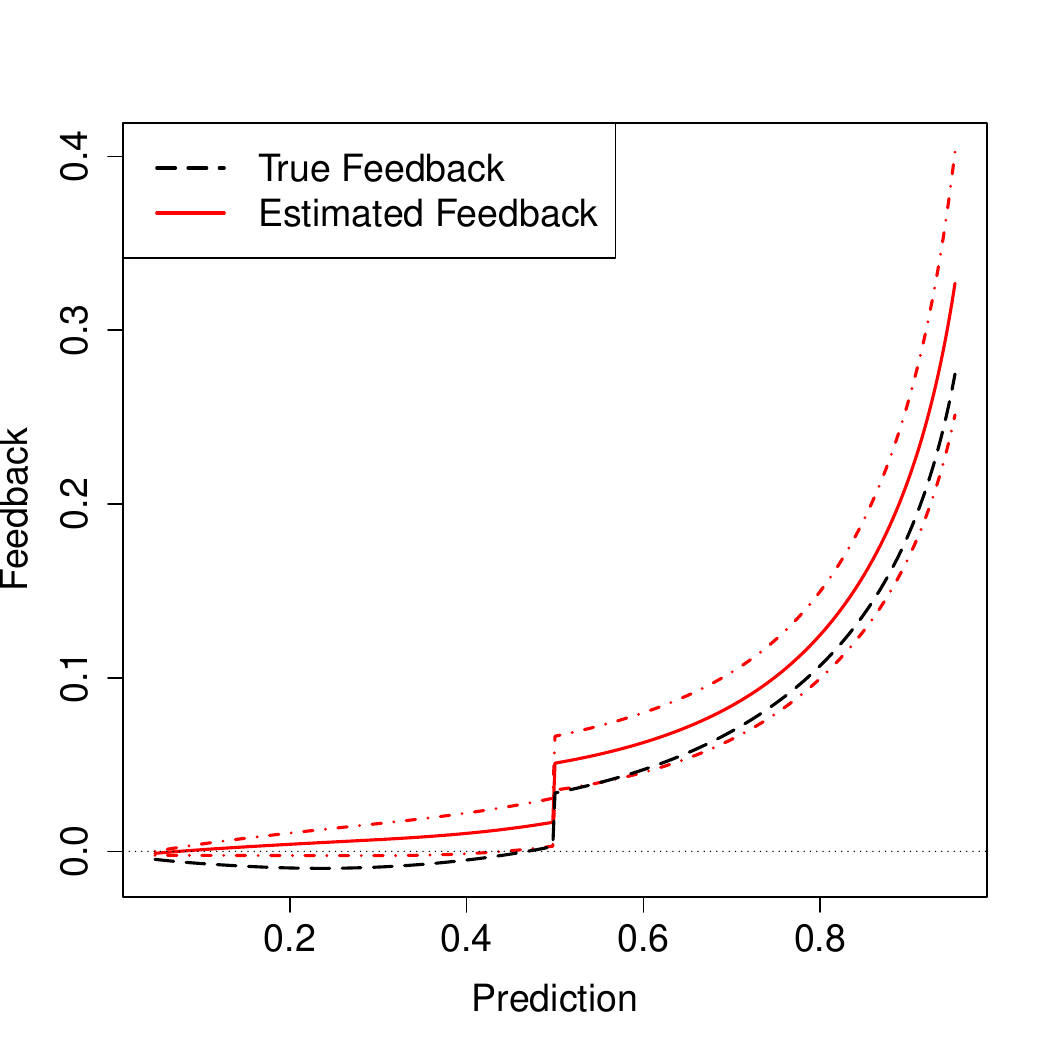}
\vspace{-3mm}
\caption{Simulation aiming to replicate realistic feedback in a real-world classifier. The red solid line is our feedback estimate; the black dashed line is the best additive approximation to the true feedback. The $x$-axis shows predictions in probability space; the $y$ axis shows feedback in log-odds space. The error bars indicate pointwise confidence intervals obtained using a non-parametric bootstrap with $B = 10$ replicates, and stretch 1 SE in each direction. Further experiments are provided in Appendix \ref{sec:experiments}.}
\vspace{-3mm}
\label{fig:real_sim}
\end{figure}

Our dataset had on the order of 100,000 data points, half of which were used for fitting the model itself and half of which were used for feedback simulation. We generated data for 5 simulated time periods, adding noise with $\noisesd = 0.1$ at each step, and fit feedback using a spline basis discussed in Appendix \ref{sec:experiments}.
The ``true feedback" curve was obtained by fitting a spline regression to the additive feedback model by looking at the unobservable $\ynewclean$; we used a $df = 5$ natural spline with knots evenly spread out on $[-9, 3]$ in log-odds space plus a jump at 0.

For our classifier of interest, we have fairly strong reasons to believe that the feedback function may have a jump at zero, but probably shouldn't have any other big jumps. Assuming that we know {\it a priori} where to look for jumps does not seem to be too big a problem for the practical applications we have considered. Results for feedback detection are shown in Figure \ref{fig:real_sim}. Although the fit is not perfect, we appear to have successfully detected the shape of feedback. The error bars for estimated feedback were obtained using a non-parametric bootstrap \cite{efron1993introduction} for which we resampled pairs of (current, next) predictions. 

This simulation suggests that our method can be used to accurately detect feedback on scales that may affect real-world systems. Knowing that we can detect feedback is reassuring from an engineering point of view.  On a practical level, the feedback curve shown in Figure \ref{fig:real_sim} may not be too big a concern \emph{yet}: the average feedback is well within the noise level of the classifier. But in large-scale systems the ways in which a model interacts with its environment is always changing, and it is entirely plausible that some innocuous-looking change in the future would increase the amount of feedback. Our methodology provides us with a way to continuously monitor how feedback is affected by changes to the system, and can alert us to changes that cause problems. In Appendix \ref{sec:experiments}, we show some simulations with a wider range of effect sizes.

\vspace{-3mm}

\section{Conclusion}

\vspace{-2.5mm}

In this paper, we proposed a randomization scheme that can be used to detect feedback  in real-world predictive systems. Our method involves adding noise to the predictions made by the system; this noise puts us in a randomized experimental setup that lets us measure feedback as a causal effect. In general, the scale of the artificial noise required to detect feedback is smaller than the scale of the natural predictor noise; thus, we can deploy our feedback detection method without disturbing our system of interest too much. The method does not require us to make hypotheses about the mechanism through which feedback may propagate, and so it can be used to continuously monitor predictive systems and alert us if any changes to the system lead to an increase in feedback.

\vspace{-2.5mm}

\paragraph{Related Work}

The interaction between models and the systems they attempt to describe has been extensively studied across many fields. Models can have different kinds of feedback effects on their environments. At one extreme of the spectrum, models can become self-fulfilling prophecies: for example, models that predict economic growth may in fact cause economic growth by instilling market confidence \cite{merton1948self,ferraro2005economics}. At the other end, models may distort the phenomena they seek to describe and therefore become invalid. A classical example of this is a concern that any metric used to regulate financial risk may become invalid as soon as it is widely used, because actors in the financial market may attempt to game the metric to avoid regulation \cite{danielsson2002emperor}.
However, much of the work on model feedback in fields like finance, education, or macro-economic theory has focused on negative results: there is an emphasis on understanding when feedback can happen and promoting awareness about how feedback can interact with policy decisions, but there does not appear to be much focus on actually fitting feedback.
One notable exception is a paper by Akaike \cite{akaike1968use}, who showed how to fit cross-component feedback in a system with many components; however, he did not add artificial noise to the system, and so was unable to detect feedback of a single component on itself.

\vspace{-2.5mm}

\paragraph{Acknowledgments}

The authors are grateful to Alex Blocker, Randall Lewis, and Brad Efron for helpful suggestions and interesting conversations. S. W. is supported by a B. C. and E. J. Eaves Stanford Graduate Fellowship.

\newpage

{\small
\bibliographystyle{unsrt}
\bibliography{references}
}

\newpage

\begin{appendix}

\section{Fitting Non-Linear Feedback by Ordinary Least Squares Regression}
\label{sec:nl_fit}

\secshrink

Carrying out the fitting procedure outlined in Section \ref{sec:nl_fit} is straight-forward using standard \texttt{R} functions if we are willing to construct $\mu(\cdot)$ and $f(\cdot)$ using pre-specified basis expansions
\begin{equation}
\hmu(y) = \hbeta_\mu \cdot b_\mu(y) \eqand \hf(y) = \hbeta_f \cdot b_f(y). 
\end{equation}
Recall that $b_f$ cannot have an intercept, as it would not be identifiable. We first need to construct the design matrices
\begin{equation}
X_\mu = \begin{pmatrix} \vdots \\ b^\intercal_\mu\p{\yold} \\ \vdots \end{pmatrix}
\eqand
X_f = \begin{pmatrix} \vdots \\ b^\intercal_f\p{\yold + \noise} - \p{\noiseden * b_f}^\intercal\p{\yold}\\ \vdots \end{pmatrix} 
\end{equation}
from \eqref{eq:design}. Constructing $X_\mu$ just involves choosing a basis function; however, evaluating
\begin{equation}
\gamma_i = \p{\noiseden * b_f}^\intercal\p{\yold} 
\end{equation}
for each row of $X_f$ can be computationally intensive if we are not careful. In particular, evaluating $\gamma_i$ by numerical integration separately for each $i$ can be painfully slow. A more efficient way to compute $\gamma_i$ is to evaluate $\p{\noiseden * b_f}(y)$ over a grid of $y$-values in a single pass using the fast Fourier transform (e.g., by using \texttt{convolve} in \texttt{R}), and then to linearly interpolate the result onto the real line (e.g., using \texttt{approxfun}).

Once we have computed these design matrices, we can estimate $\hbeta_\mu$ and $\hbeta_f$ by solving the linear regression problems
\begin{equation}
\label{eq:mu_reg}
Y \sim X_\mu \hbeta_\mu
\end{equation}
and
\begin{equation}
\label{eq:f_reg}
\p{Y - X_\mu \hbeta_\mu} \sim X_f \hbeta_f,
\end{equation}
where $Y$ is just a vector with entries $\ynewnoised$. Notice that this whole procedure only requires knowledge of $X_\mu$, $X_f$, and the noised new predictions $\ynewnoised$; we never reference the counterfactual predictions $\ynew$ or unobservable predictions $\ynewclean$.

In practice, most of the errors in our procedure come from the difference equation \eqref{eq:f_reg} and not from the conditional mean regression \eqref{eq:mu_reg}. Thus, when our model is well-specified and the additivity assumption holds, we can get good estimates for the accuracy of $f$ by looking at the parametric standard error estimates provided by \texttt{lm} from fitting \eqref{eq:f_reg}; this is what we did for the simulations presented in Figure \ref{fig:sim}. In case of model misspecification, however, parametric confidence intervals can break down and it is better to use non-parametric methods such as the bootstrap. We used a non-parametric bootstrap for the logs simulation presented in Section \ref{sec:real_ex}.

\secshrink

\section{Simulation Experiments}
\label{sec:experiments}

\secshrink

Here, we present a collection of simulation experiments, the results of which are given in Figure \ref{fig:sim}. These examples are all logistic regression examples with additive feedback in log-odds space. In the plots, the $y$-axis shows feedback in log-odds space, whereas the $x$-axis shows deployed predictions in probability space.

\begin{figure}[p]
\vspace{-5mm}
        \centering
        \begin{subfigure}[b]{0.4\textwidth}
                \includegraphics[width=\textwidth]{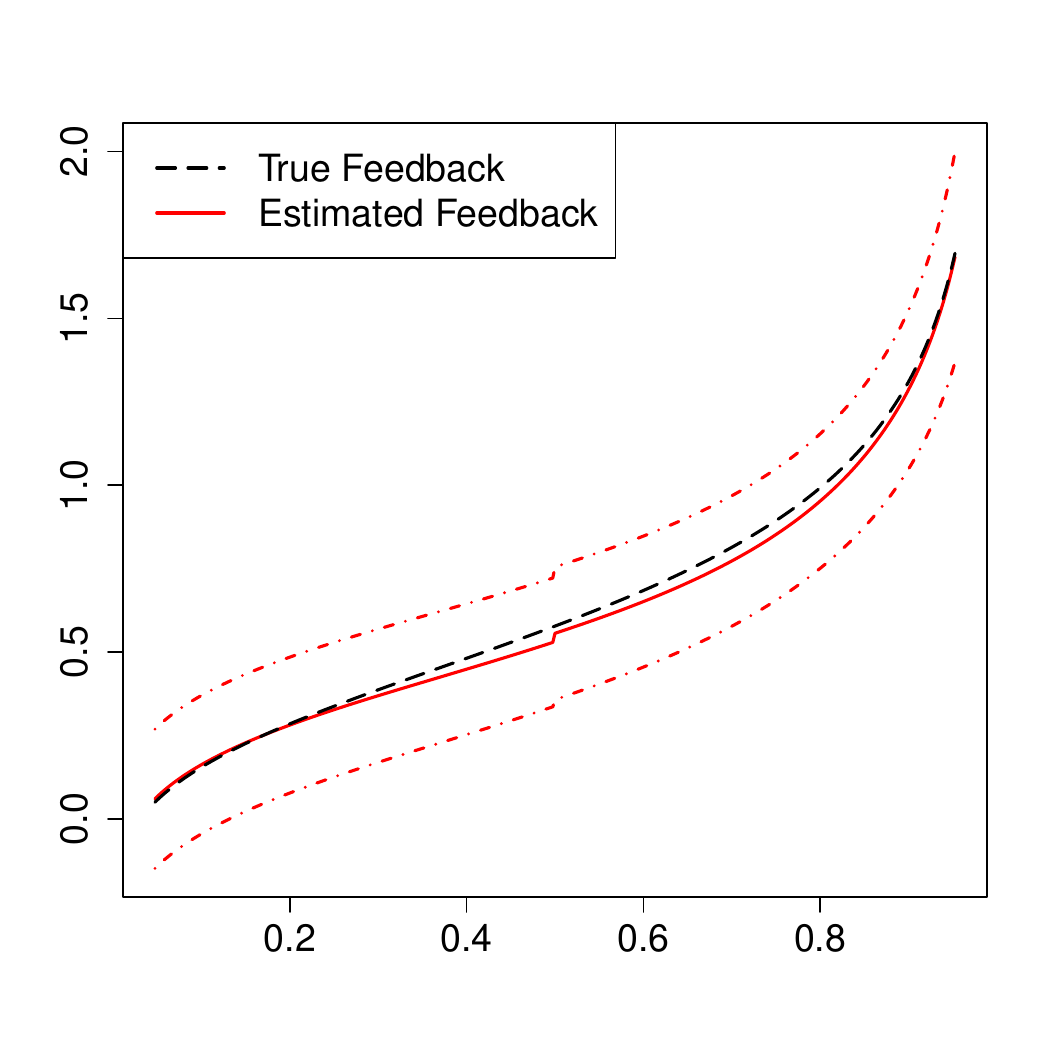}
                \caption{Continuous, monotone feedback}
        \end{subfigure}%
	\hspace{0.05\textwidth}
        \begin{subfigure}[b]{0.4\textwidth}
                \includegraphics[width=\textwidth]{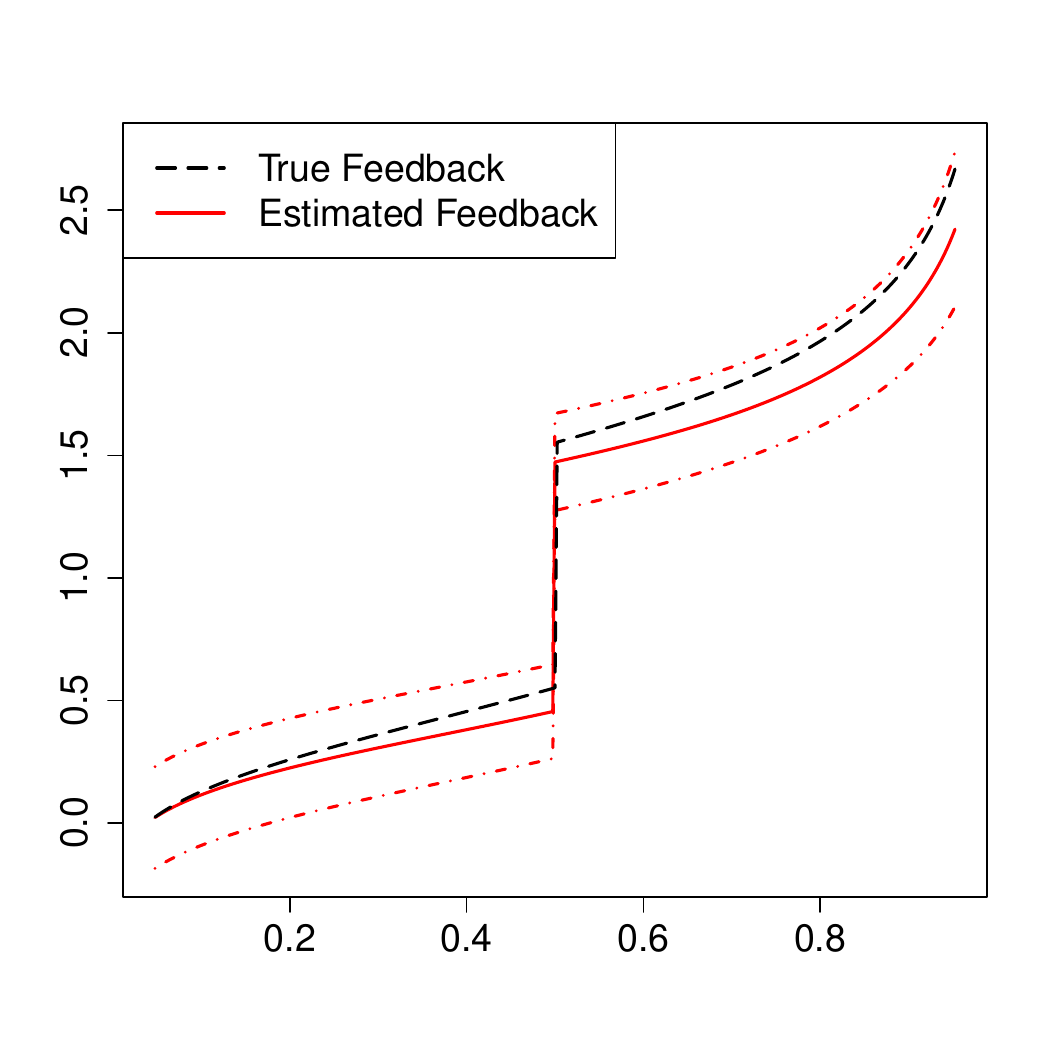}
                \caption{Monotone feedback with jump}
        \end{subfigure}%
	\\
\vspace{-3mm}
        \begin{subfigure}[b]{0.4\textwidth}
                \includegraphics[width=\textwidth]{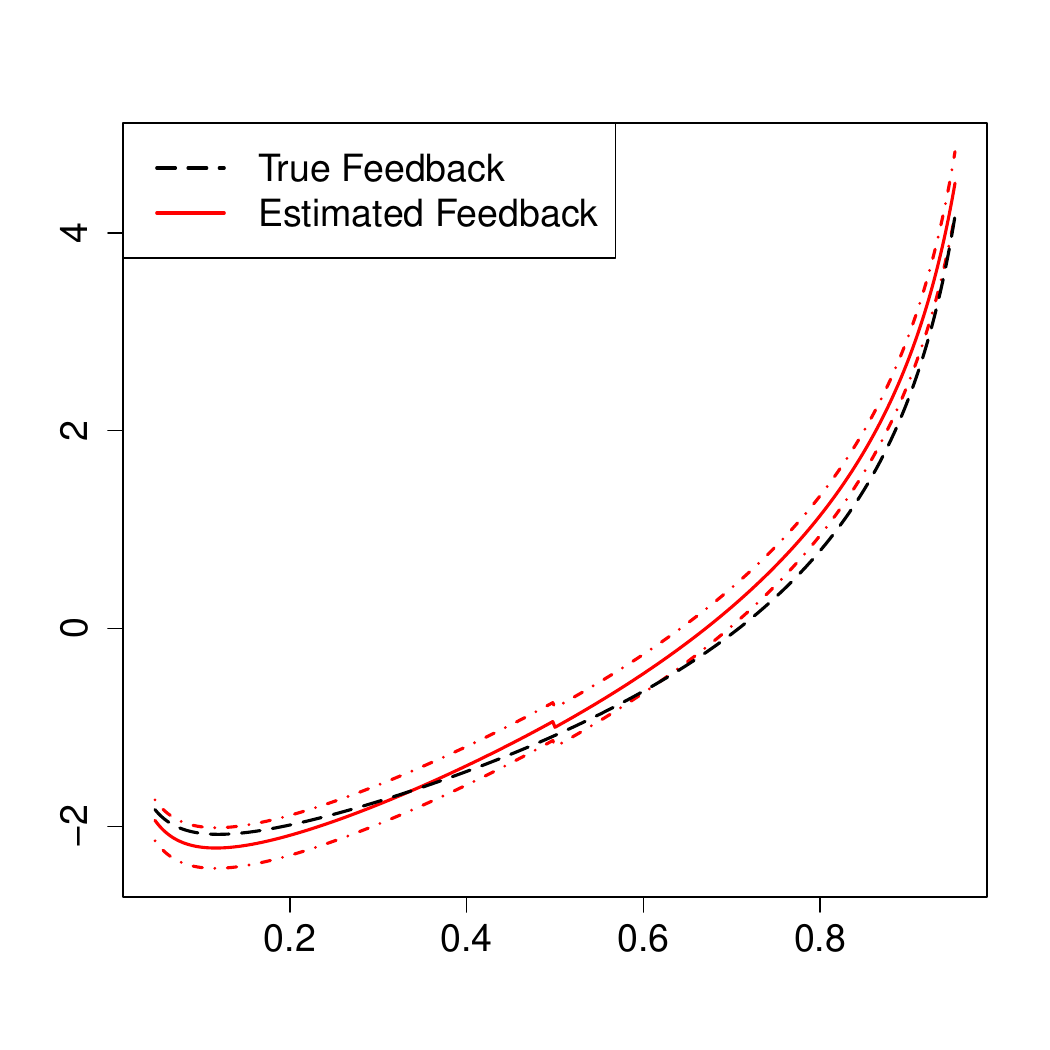}
                \caption{Continuous, non-monotone feedback}
        \end{subfigure}%
	\hspace{0.05\textwidth}
        \begin{subfigure}[b]{0.4\textwidth}
                \includegraphics[width=\textwidth]{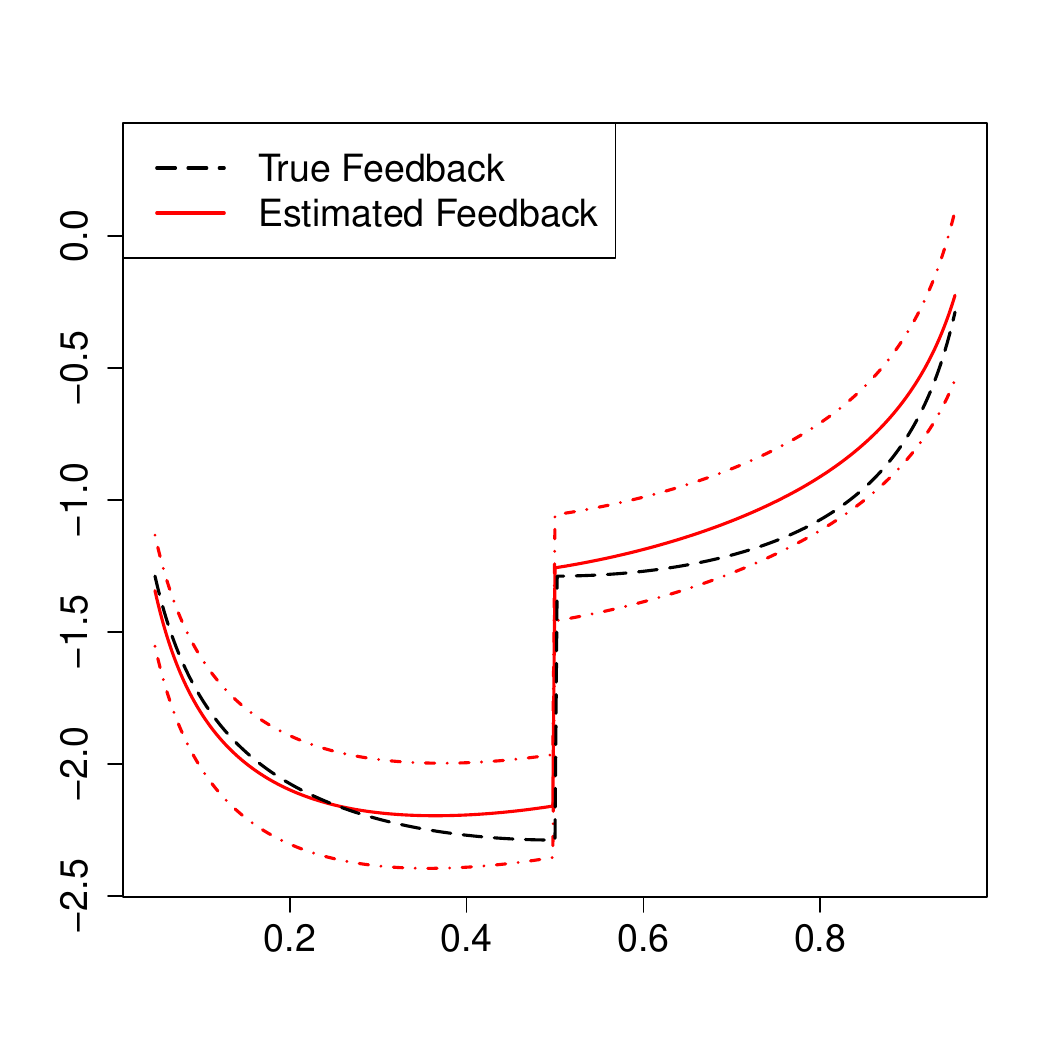}
                \caption{Non-monotone feedback with jump}
        \end{subfigure}%
	\\
\vspace{-3mm}
        \begin{subfigure}[b]{0.4\textwidth}
                \includegraphics[width=\textwidth]{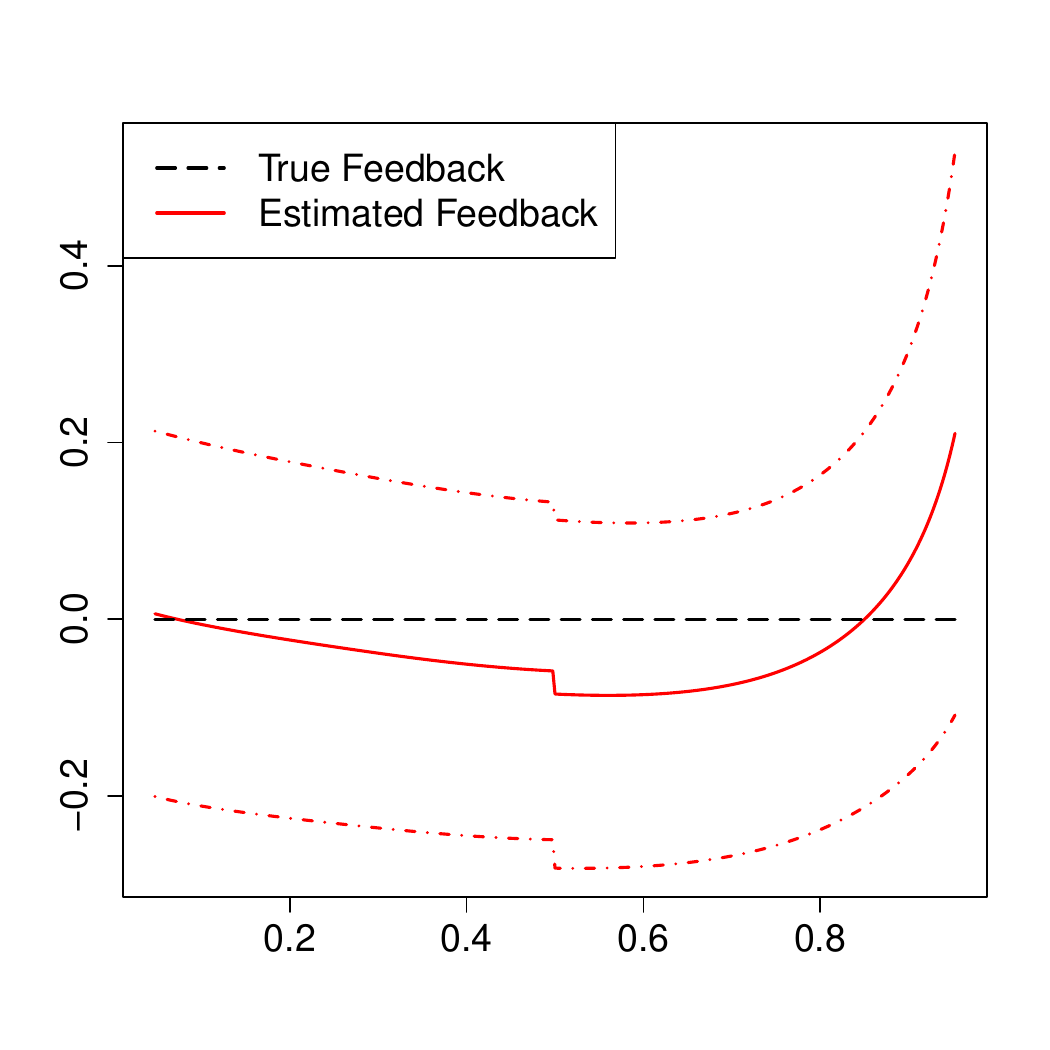}
                \caption{No feedback}
        \end{subfigure}%
	\hspace{0.05\textwidth}
        \begin{subfigure}[b]{0.4\textwidth}
                \includegraphics[width=\textwidth]{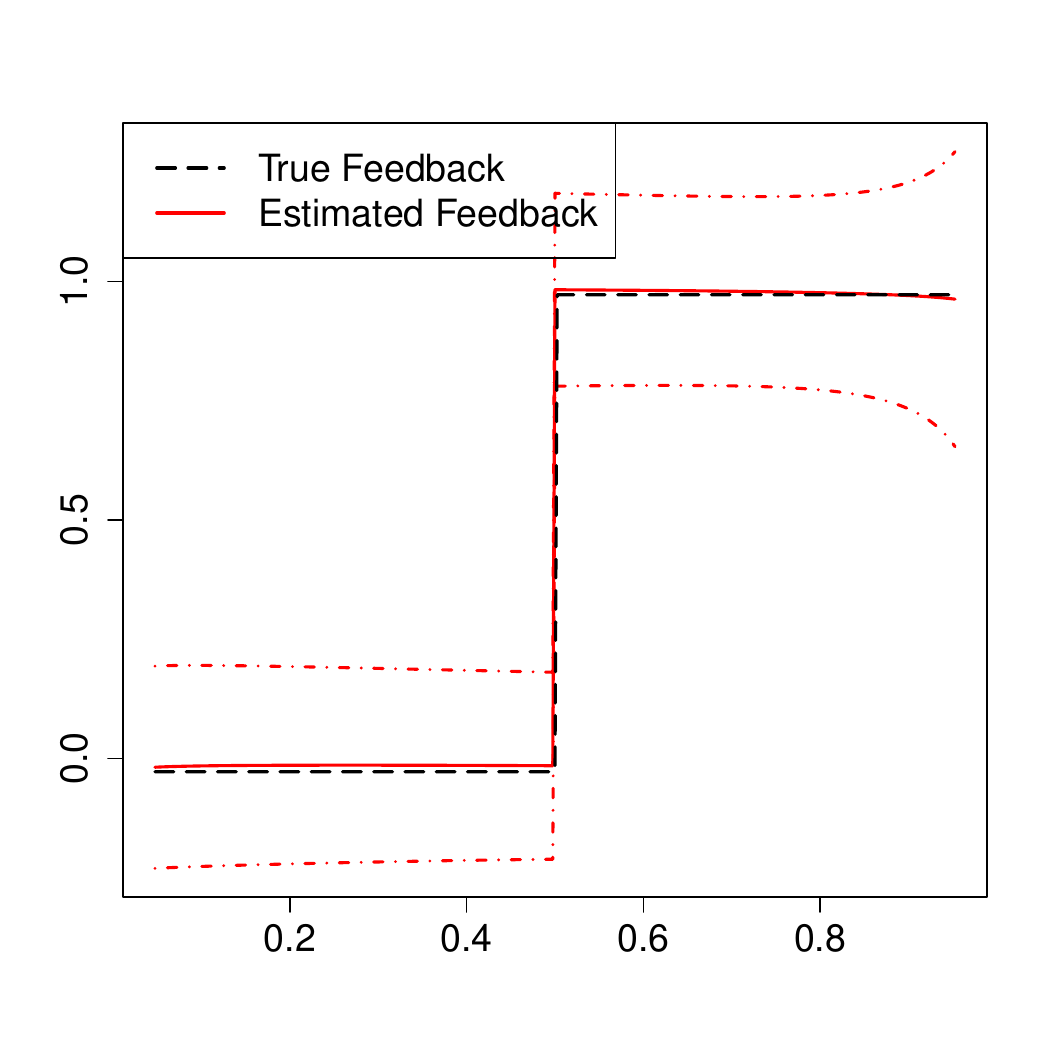}
                \caption{Jump only}
        \end{subfigure}%
\vspace{-3mm}
\caption{Testing the proposed feedback detection method on some simulation examples. We plot actual predictions in probability space on the $x$-axis against feedback in log-odds space on the $y$-axis. The dashed black line is the true feedback; the solid red line is our feedback estimate along with point-wise error bars stretching 1 SE in each direction. Note that in panel (e) the $y$-axis has a much finer scale than in the other panels.}
\label{fig:sim}
\end{figure}

The simulations all had $n = 100,000$ (old prediction, new prediction) pairs. The predictions had natural noise with standard error $\sigma = 0.5$, i.e., the $i^{th}$ pair was centered at $\mu_i$ and distributed as $\yold, \, \ynewclean \simiid \nn\p{\mu_i, \, 0.5^2}$. We added Gaussian noise with $\noisesd = 0.25$ to the deployed predictions.
In order to mimic real datasets, we made our simulation highly imbalanced: There were many strong predictions for the negative class with $\mu_i \ll 0$, but less so for the positive class. This is why our model performed better near $x = 0$ than near $x = 1$.

We fit both the trend $\mu(\cdot)$ and the feedback function $f(\cdot)$ as the sum of a natural spline with $df = 3$ degrees of freedom and knots spread evenly over $[-3, \, 3]$, and a jump at zero log-odds (i.e., $x = 0.5$). The dashed lines show the different feedback functions $f$ used in each example.

As emphasized earlier, the intercept of the feedback function $f$ is not identifiable from our experiments. We fixed the intercept by setting the average fitted feedback over all training examples to 0. Since all our training sets were heavily imbalanced, this effectively amounted to setting feedback to 0 at $x = 0$. {The plots that do not hit the (0, 0) point are missing a sharp spike at the left-most end; the plot ends at $x = \text{logit}(-3) \approx 0.05 $.}

As we see from Figure \ref{fig:sim}, our method accurately fits the feedback function in all six examples, including the null case with no feedback. The error bars depict standard asymptotic error bars produced by the \texttt{R} function \texttt{lm} when fitting \eqref{eq:main}.


\secshrink

\section{Extensions and Further Work}

\secshrink
 
In this section, we discuss some possible extensions to the work presented in this paper.

\secshrink

\subsection{Feedback Removal}

\secshrink

If we detect feedback in a real-world system, we can try to identify the root causes of the feedback and fix the problem by removing the feedback loop. That being said, a natural follow-up question to our research is whether we can automatically remove feedback.
In the context of the linear feedback model \eqref{eq:linear}, we incur an expected squared-error loss of
$$ \tilde{\ell} = \beta^2 \, \EE{\p{\yold}^2} $$
from completely ignoring the feedback problem. Meanwhile, if we use the maximum likelihood estimate $\hbeta$ to correct feedback, we suffer a loss
$$ \ell_{\noisesd} = \Var{\hbeta} \, \EE{\p{\yold}^2} + \noisevar, $$
where the first term comes from our errors in estimating $\hbeta$ and the second comes from the extra noise we needed to inject into the system in order to detect the feedback.

An interesting topic for further research would be to find how to optimally set the scale $\noisesd$ of the artificial noise under various utility assumptions, and to understand the potential failure modes of feedback removal under model misspecification. In order to remove feedback, we would also need to have some way of dealing with the intercept term.

\secshrink

\subsection{Covariate-Dependent Feedback}

\secshrink

Our analysis was presented in the context of the additive feedback model
$$ \ynewp = \ynewclean + f\p{\pred}. $$
In practice, however, we may want to let feedback depend on some other covariates $z$
$$ \ynewp = \ynewclean + f\p{\pred, \, z_i^{(t)}}; $$
for example, we may want to slice feedback by geographic region. One particularly interesting but challenging extension would be to make feedback depend on the unperturbed prediction $\yoldclean$:
$$ \ynewp = \ynewclean + f\p{\pred, \, \yoldclean}. $$
For example, if $\hy$ is a prediction for how good a search result is, we might assume that search results that are actually good $(\hy\fbb{\emptyset} \gg 0)$ have a different feedback response from those that are terrible $(\hy\fbb{\emptyset} \ll 0)$. The challenge here is that $\hy\fbb{\emptyset}$ is unobserved, and so we need to have it act on $f$ via proxies. Developing a formalism that lets $f$ depend on $\hy\fbb{\emptyset}$ in a useful way while allowing for consistent estimation seems like a promising pathway for further work.

\secshrink

\subsection{Penalized Regression}
\label{sec:penalized}

\secshrink

The key technical challenge in implementing our method for feedback detection is solving the spline equation \eqref{eq:main_regr}. In Section \ref{sec:pragmatic} we proposed a pragmatic approach that enabled us to get good feedback estimates in many examples. However, it should be possible to devise more general methods for fitting $f$. The equation \eqref{eq:main_regr} is linear in $f$, and so any strictly convex penalty function $L: \mathcal{A} \rightarrow \RR$ over some convex subset $\mathcal{A} \subseteq \{\RR \rightarrow \RR\}$ of real valued functions on $\RR$ leads to a well-defined estimator $\hf$ through the convex optimization problem
\begin{align}
\label{eq:obj}
\hf_L = \argmin_{f \in \mathcal{A}} \Bigg\{\sum &\bigg(\ynewnoised - \mu\p{\yold} -  f\p{\yold + \noise}  \\
\notag
&\ \ \ \ \ + \noiseden * f\p{\yold}\bigg)^2 + L(f)\Bigg\}.
\end{align}
In the context of smoothing splines, a popular choice is to use
$$ L(f) = \lambda \int_\RR \Norm{f''(x)}^2 \ dx $$
and make $\mathcal{A}$ be the set on which this integral is well-defined. There is an extensive literature on non-parametric regression problems constrained by smoothness penalties
\cite{evgeniou2000regularization,girosi1995regularization,green1994nonparametric,hastie1990generalized,wahba1990spline}; presumably, similar approaches should also give us smoothing spline solutions to \eqref{eq:main_regr}.

\secshrink

\subsection{Deterministic Designs}

\secshrink

Finally, in this paper, we have focused on detecting feedback by adding random noise to raw model predictions. It would be interesting to see whether we can improve the efficiency of our procedure by optimizing the noise choice more closely and using a deterministic design. The problem of finding optimal designs for spline-type problems has been studied by several authors \cite{biedermann2011optimal,muller1996optimal,studden1969admissible}.


\secshrink

\section{Proofs}

\secshrink

\begin{proof}[Proof of Theorem \ref{thm:linear}]
Because $\noise$ is fully artificial noise, we know a-priori that $\noise$ and $\ynew$ are independent. Thus, we can treat $\ynew$ as a homoscedastic noise term for our regression, and \eqref{eq:simplefit} follows immediately from standard results for ordinary least squares regression.
\end{proof}

\secshrink

\begin{proof}[Proof of Theorem \ref{thm:lin_efficient}]
The $\eta_i^{(t)}$ are independent of the $\noise$, and so \eqref{eq:eff_beta} follows from an argument analogous to the one that led to \eqref{eq:simplefit}.
If the $\eta_i^{(t)}$ are still homoscedastic after conditioning on $\yold$ then, because the $\eta_i^{(t)}$ are mean-zero by construction, the fact that $\hbeta^*$ is the best linear unbiased estimator of $\beta$ follows directly from an application of the Gauss-Markov theorem where we treat $\noise$ as fixed and $\eta_i^{(t)}$ as random, see\cite{lehmann1998theory}, p. 184.
\end{proof}

\secshrink

\begin{proof}[Proof of Theorem \ref{thm:spline}]
Given the regression problem described above, \eqref{eq:full_var} follows directly from standard results on heteroscedastic linear regression \cite{huber1967behavior, white1980heteroskedasticity}. Note that our theoretical result assumes that $\hbeta_\mu$ and $\hbeta_f$ are trained on independent data sets.
\end{proof}

\end{appendix}

\end{document}